\definecolor{webgreen}{rgb}{0,.5,0}
\definecolor{webbrown}{rgb}{.6,0,0}
\newcommand{\seqnum}[1]{\href{http://oeis.org/#1}{\underline{#1}}}
\DeclareMathOperator{\od}{od}
\DeclareMathOperator{\ev}{ev}
\def\Enn{{\mathbb{N}}}
\def\Zee{{\mathbb{Z}}}
\def \nodiv{{|\kern-4.2pt/}}
\def\modd#1 #2{#1\ ({\rm mod}\ #2)}
\begin{document}

\theoremstyle{plain}
\newtheorem{theorem}{Theorem}
\newtheorem{corollary}[theorem]{Corollary}
\newtheorem{lemma}[theorem]{Lemma}
\newtheorem{proposition}[theorem]{Proposition}

\theoremstyle{definition}
\newtheorem{definition}[theorem]{Definition}
\newtheorem{example}[theorem]{Example}
\newtheorem{conjecture}[theorem]{Conjecture}

\theoremstyle{remark}
\newtheorem{remark}[theorem]{Remark}

\title{Discriminators and $k$-Regular Sequences}

\author{Sajed Haque and Jeffrey Shallit\\
School of Computer Science \\
University of Waterloo \\
Waterloo, ON  N2L 3G1 \\
Canada \\
{\tt s24haque@cs.uwaterloo.ca} \\
{\tt shallit@cs.uwaterloo.ca} 
}

\maketitle

\begin{abstract}
The discriminator of an integer sequence ${\bf s} = (s(i))_{i \geq 0}$,
introduced by Arnold, Benkoski, and McCabe in 1985,
is the map $D_{\bf s} (n)$ that sends $n \geq 1$ to the least positive
integer $m$ such that
the $n$ numbers $s(0), s(1), \ldots, s(n-1)$ are pairwise incongruent
modulo $m$.  In this note we consider the discriminators of a certain
class of sequences, the $k$-regular sequences.  We compute the
discriminators of two such sequences,
the so-called ``evil'' and ``odious'' numbers,
and show they are $2$-regular.  We give an example of a 
$k$-regular sequence whose discriminator is not $k$-regular.

Finally, we examine sequences that are their own discriminators,
and count the number of length-$n$ finite sequences with this property.
\end{abstract}

\section{Discriminators}
\label{intro}

Let ${\bf s} = (s(i))_{i \geq 0}$ be a sequence of distinct integers.
For each $n \geq 1$, if 
the $n$ numbers $s(0), s(1), \ldots, s(n-1)$ are pairwise
incongruent modulo $m$, we say that $m$ {\it discriminates} them.
For $n \geq 1$ we define 
$D_{\bf s} (n)$ to be the least positive integer $m$ that
discriminates
the numbers $s(0), s(1), \ldots, s(n-1)$; such an $m$ always exists
because of the distinctness requirement.  Furthermore, 
we set $D_{\bf s} (0) = 0$, but usually this will be of no consequence.
The function (or sequence) $D_{\bf s} (n)$ is
called the {\it discriminator} of the sequence $\bf s$, and was
introduced by Arnold, Benkoski, and McCabe \cite{ABM}.
They proved that the discriminator $D_{\rm sq}(n)$ of the sequence
$(n+1)^2_{n \geq 0} = 1, 4, 9, 16, \ldots$ of positive
integer squares is given by
$$ D_{\rm sq}(n) = \begin{cases}
	1, & \text{if $n = 1$}; \\
	2, & \text{if $n = 2$}; \\
	6, & \text{if $n = 3$}; \\
	9, & \text{if $n = 4$}; \\
	\min \{k \ : \ k \geq 2n \text{ and ($k = p$ or $k = 2p$ for
		some prime $p$) } \}, & \text{if $n > 4$}.
	\end{cases}
$$

More recently, discriminators of various sequences were studied by
Schumer and Steinig \cite{Schumer&Steinig:1988}, Barcau \cite{Barcau:1988},
Schumer \cite{Schumer:1990},
Bremser, Schumer, and Washington \cite{Bremser&Schumer&Washington:1990},
Moree and Roskam \cite{Moree&Roskam:1995}
Moree \cite{Moree:1996},
Moree and Mullen \cite{Moree&Mullen:1996},
Zieve \cite{Zieve:1998},
Sun \cite{Sun:2013}, and
Moree and Zumalac\'arrequi \cite{Moree&Zumalacarregui:2016}.

In this paper we recall the definition of $k$-regular sequences, an
interesting class of sequences that has been widely studied.  We
introduce two well-known $2$-regular sequences, the so-called
``evil'' and ``odious'' numbers.  We prove that their discriminators
are $2$-regular.  Finally, we give an example of a $k$-regular
sequence whose discriminator sequence is not $k$-regular.

In the sequel, we use the following notation.
Let $\Sigma_k$ denote the alphabet $\{ 0, 1, \ldots, k-1 \}$.
If $x \in \Sigma_k^*$ is a string of digits,
then $[x]_k$ denotes the value of
$x$ when considered as a base-$k$ number.  If $n$ is an integer,
then $(n)_k$ is the string giving the canonical base-$k$ representation
of $n$ (with no leading zeroes).  If $x$ is a string of digits, then
$|x|$ denotes the length of the string $x$, and $|x|_a$ denotes
the number of occurrences of the letter $a$ in $x$.
Finally, $x^n = \overbrace{xx \cdots x}^n$ for $n \geq 0$.

By $S+i$, for $S$ a set of integers and $i$ an integer, we mean the
set $\{ x + i \ : \ x \in S\}$.
For sets $S$ and $T$, we write $S \sqcup T$ to denote the union
of $S$ and $T$, {\it and\/} the assertion
that this union is actually disjoint.

\section{$k$-regular sequences}

Let $k \geq 2$ be an integer.
The $k$-regular sequences are an interesting class of sequences with pleasant
closure properties \cite{Allouche&Shallit:1992,Allouche&Shallit:2003}.
They can be defined in several equivalent ways, and here we give three:

\medskip

-- They are the class of sequences 
$(s(n))_{n \geq 0}$ such that the set of 
subsequences of the form
$$ \{ (s(k^e n + i))_{n \geq 0} \ : \ 
e \geq 0 \text{ and } 0 \leq i < k^e \} $$
is a subset of a finitely-generated $\Zee$-module.

\medskip

-- They are the class of sequences $(s(n))_{n \geq 0}$
for which there exist an integer $r \geq 1$,
a $1 \times r$ row vector $u$,
an $r \times 1$ column vector $w$,
and an $r \times r$ matrix-valued morphism $\mu$ with domain $\Sigma_k^*$
such that $s(n) = u \mu(v) w$ for all strings $v$
with $[v]_k = n$.

\medskip

-- They are the class of sequences such that there are a finite
number of recurrence relations of the form
$$s(k^e n + i) = \sum_j a_j s(k^{e_j} n + i_j)$$
where $e \geq 0$, $e_j < e$, $0 \leq i < k^e$, and $0 \leq i_j < k^{e_j}$,
that completely determine all but finitely many values of $s$.

The $k$-regular sequences satisfy a number of nice closure properties.
We recall the following (see \cite{Allouche&Shallit:1992}):

\begin{theorem}
Let ${\bf r} = (r_i)_{i \geq 0}$ and ${\bf s} = (s_i)_{i \geq 0}$ be
two $k$-regular sequences of integers, and let $m \geq 1$ be an integer.
Then so are
\begin{itemize}
\item[(a)] ${\bf r} + {\bf s} = (r_i + s_i)_{i \geq 0}$;
\item[(b)] ${\bf r}{\bf s} = (r_i s_i)_{i \geq 0}$;
\item[(c)]  ${\bf r} \bmod m = (r_i \bmod m)_{i \geq 0}$.
\end{itemize}
\label{kreg}
\end{theorem}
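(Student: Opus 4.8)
The plan is to work with the second characterization in the list above --- the linear representation $s(n) = u\,\mu(v)\,w$ --- since each of the three closure properties corresponds to a simple operation on such representations. Suppose then that $r(i) = u_1\,\mu_1(v)\,w_1$ and $s(i) = u_2\,\mu_2(v)\,w_2$ hold for all strings $v$ with $[v]_k = i$, where for $t \in \{1,2\}$ the map $\mu_t$ is a matrix-valued morphism from $\Sigma_k^*$ into $\Zee^{r_t \times r_t}$, $u_t$ is $1 \times r_t$, and $w_t$ is $r_t \times 1$.

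For part (a) I would form the ``direct sum'' of the two representations: take $u = [\, u_1 \ \ u_2 \,]$ (the $1 \times (r_1+r_2)$ vector obtained by juxtaposition), let $\mu(a)$ be the block-diagonal matrix with diagonal blocks $\mu_1(a)$ and $\mu_2(a)$, and let $w$ be the $(r_1+r_2) \times 1$ vector stacking $w_1$ over $w_2$. A block multiplication gives $u\,\mu(v)\,w = u_1\,\mu_1(v)\,w_1 + u_2\,\mu_2(v)\,w_2 = r(i) + s(i)$ whenever $[v]_k = i$, a valid linear representation of $\mathbf r + \mathbf s$. For part (b) I would instead use the Kronecker product: take $u = u_1 \otimes u_2$, $\mu(a) = \mu_1(a)\otimes\mu_2(a)$, and $w = w_1 \otimes w_2$. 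The mixed-product rule $(A\otimes B)(C\otimes D) = (AC)\otimes(BD)$ yields $\mu(v) = \mu_1(v)\otimes\mu_2(v)$ for every $v$, and hence
$$ u\,\mu(v)\,w \;=\; \bigl(u_1\,\mu_1(v)\,w_1\bigr)\otimes\bigl(u_2\,\mu_2(v)\,w_2\bigr) \;=\; r(i)\,s(i), $$
the last equality because each factor is a $1 \times 1$ matrix; this is a linear representation of $\mathbf{r}\mathbf{s}$ of size $r_1 r_2$. In both cases the value produced depends only on $i = [v]_k$, since this independence is inherited from the representations of $\mathbf r$ and $\mathbf s$.

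Part (c) requires a genuinely different idea, and I expect it to be the main obstacle: one cannot simply reduce the linear combinations of (a) and (b) modulo $m$, because reduction mod $m$ is not $\Zee$-linear. Instead I would reduce the representation of $\mathbf r$ itself. Let $R = \Zee/m\Zee$ and let $\bar u$, $\bar\mu(a)$, $\bar w$ be the entrywise reductions of $u_1$, $\mu_1(a)$, $w_1$, so that $(r \bmod m)(i) = \bar u\,\bar\mu(v)\,\bar w$, computed in $R$, whenever $[v]_k = i$. Fix $e \geq 0$ and $j$ with $0 \leq j < k^e$, and let $v_0$ be the base-$k$ representation of $j$ written with exactly $e$ digits. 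Since $k^e n + j$ is represented by the string $(n)_k\,v_0$, the subsequence $\bigl((r\bmod m)(k^e n + j)\bigr)_{n \geq 0}$ equals the sequence $n \mapsto \bar u\,\bar\mu((n)_k)\,c$ with $c = \bar\mu(v_0)\,\bar w \in R^{r_1}$. As $c$ ranges over the finite set $R^{r_1}$ we obtain at most $m^{r_1}$ distinct such subsequences, and any finite set of integer sequences lies inside the finitely generated $\Zee$-module it generates; so the first characterization above shows that $\mathbf r \bmod m$ is $k$-regular. (Equivalently, the finitely many reachable row vectors $\bar u\,\bar\mu(v)$, $v \in \Sigma_k^*$, can be used as the states of a finite automaton that outputs $r(n)\bmod m$ on input $(n)_k$, so that $\mathbf r \bmod m$ is in fact $k$-automatic.)
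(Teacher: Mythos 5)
The paper does not prove this theorem at all: it is stated as a recalled result with a pointer to Allouche and Shallit's 1992 paper, so there is no in-paper argument to compare against. Your proof is correct and is essentially the standard one from that reference. Parts (a) and (b) via the direct sum and Kronecker product of linear representations are exactly the classical constructions (the cited paper phrases the same facts in the equivalent language of the first characterization: the kernel of ${\bf r}+{\bf s}$ lies in the module generated by the union of generating sets, and the kernel of ${\bf r}{\bf s}$ in the module generated by pairwise products). Your part (c) correctly identifies the one place where a different idea is needed, since reduction mod $m$ is not $\Zee$-linear, and your argument --- the reduced reachable data $\bar\mu(v_0)\bar w$ ranges over the finite set $(\Zee/m\Zee)^{r_1}$, so the $k$-kernel of ${\bf r} \bmod m$ is finite and the sequence is in fact $k$-automatic, hence $k$-regular --- is the same finiteness argument used in the literature. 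One small point worth making explicit: the subsequence $\bigl((r \bmod m)(k^e n + j)\bigr)_{n \ge 0}$ is an integer sequence with values in $\{0,\dots,m-1\}$, and two index pairs giving the same vector $c = \bar\mu(v_0)\bar w$ give literally the same integer sequence after the canonical lift, which is what the finiteness of the kernel requires; your write-up implicitly uses this identification and it is harmless. No gaps.
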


\section{The evil and odious numbers}

The so-called ``evil'' and ``odious'' numbers are two examples of $2$-regular
sequences; they are sequences \seqnum{A001969} and \seqnum{A000069} in
Sloane's {\it On-Line Encyclopedia of Integer Sequences}, respectively.

The evil numbers $(\ev(n))_{n \geq 0}$ are 
$$ 0, 3, 5, 6, 9, 10, 12, 15, 17, 18, 20, 23, 24, 27, 29, 30, 33, 34, 36, 39, 40, 43, \ldots$$
and are those non-negative numbers
having an even number of $1$'s in their base-2 expansion.

The odious numbers $(\od(n))_{n \geq 0}$ are
$$ 1, 2, 4, 7, 8, 11, 13, 14, 16, 19, 21, 22, 25, 26, 28, 31, 32, 35, 37, 38, 41, \ldots$$
and are those non-negative numbers having
an odd number of $1$'s in their base-$2$ expansion.

Clearly the union of these two sequences is $\Enn$, the set of all
non-negative integers.

To see that these two sequences are
$2$-regular, note that both sequences satisfy
the recurrence relations
\begin{align*}
f(4n) &= -2f(n) + 3f(2n) \\
f(4n+1) &= -2f(n) + 2f(2n) + f(2n+1) \\
f(4n+2) &= {2\over 3} f(n) + {5\over 3} f(2n+1) \\
f(4n+3) &= 6 f(n) - 3f(2n) + 2f(2n+1) , 
\end{align*}
which can be proved by an induction using the characterization
in \cite[Example 12]{Allouche&Shallit:1992}.  

Let $\mathcal{O}_n = \{\od(i): \od(i) < n\}$ (resp.,
$\mathcal{E}_n = \{\ev(i): \ev(i) < n\}$)
denote the 
set of all odious (resp., evil) numbers that are strictly less than $n$.

\begin{lemma}
\begin{itemize}
\item[(a)] For $i \geq 1$ we have
$|\mathcal{O}_{2^{i}}| = |\mathcal{E}_{2^{i}}| = 2^{i-1}$.
\item[(b)] For $i \geq 1$ we have
$\mathcal{O}_{2^{i+1}} = \mathcal{O}_{2^{i}} \sqcup (\mathcal{E}_{2^{i}} + 2^i)$.
\item[(c)] For $i \geq 1$ we have
$\mathcal{E}_{2^{i+1}} = \mathcal{E}_{2^{i}} \sqcup (\mathcal{O}_{2^{i}} + 2^i)$.
\end{itemize}
\label{two}
\end{lemma}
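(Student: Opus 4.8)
The plan is to prove parts (b) and (c) first, working directly from the defining property that evil (resp.\ odious) numbers are exactly those $n$ with $|(n)_2|_1$ even (resp.\ odd), and then to deduce (a) by a short induction that bootstraps off (b) and (c).

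For (b) and (c), the key observation concerns how the base-$2$ digit sum behaves under prepending a leading $1$. Every integer $m$ with $0 \le m < 2^{i+1}$ is of exactly one of two types: either $m < 2^i$, in which case bit $i$ of $(m)_2$ is $0$; or $m = 2^i + m'$ for a unique $m'$ with $0 \le m' < 2^i$, in which case bit $i$ is $1$ and the remaining bits of $(m)_2$ are precisely those of $(m')_2$, so that $|(m)_2|_1 = |(m')_2|_1 + 1$. Hence in the first case $m$ has the same parity of digit sum as itself (trivially), while in the second case $m$ is evil iff $m'$ is odious and odious iff $m'$ is evil. Partitioning $\{0, 1, \ldots, 2^{i+1}-1\}$ as $\{0,\ldots,2^i-1\} \cup \{2^i,\ldots,2^{i+1}-1\}$ and applying this observation to each block gives $\mathcal{O}_{2^{i+1}} = \mathcal{O}_{2^i} \cup (\mathcal{E}_{2^i} + 2^i)$ and $\mathcal{E}_{2^{i+1}} = \mathcal{E}_{2^i} \cup (\mathcal{O}_{2^i} + 2^i)$. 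These unions are disjoint for the cheap reason that every element of $\mathcal{O}_{2^i}$ and $\mathcal{E}_{2^i}$ is less than $2^i$, whereas every element of $\mathcal{E}_{2^i}+2^i$ and $\mathcal{O}_{2^i}+2^i$ is at least $2^i$; this justifies writing $\sqcup$ and completes (b) and (c).

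For (a) I would induct on $i$. The base case $i=1$ is immediate: among $\{0,1\}$ the number $0$ is evil and $1$ is odious, so $\mathcal{E}_2 = \{0\}$ and $\mathcal{O}_2 = \{1\}$, each of size $2^0 = 1$. For the inductive step, since $x \mapsto x + 2^i$ is injective, parts (b) and (c) give $|\mathcal{O}_{2^{i+1}}| = |\mathcal{O}_{2^i}| + |\mathcal{E}_{2^i}|$ and $|\mathcal{E}_{2^{i+1}}| = |\mathcal{E}_{2^i}| + |\mathcal{O}_{2^i}|$; the induction hypothesis $|\mathcal{O}_{2^i}| = |\mathcal{E}_{2^i}| = 2^{i-1}$ then makes both equal to $2^i$, as required. (Alternatively, one could prove (a) in isolation by observing that $x \mapsto x \oplus 1$ is a fixed-point-free involution on $\{0,\ldots,2^i-1\}$ that swaps evil and odious numbers, but routing it through (b) and (c) is tidier.)

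I do not anticipate a genuine obstacle here: the entire argument is elementary once the leading-bit observation is in place. The only points demanding any care are getting the bookkeeping right on which bit flips and noting that the disjointness in (b) and (c) is automatic from the size bound $x < 2^i$, rather than something that needs to be checked element by element.
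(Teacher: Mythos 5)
Your proof is correct. For parts (b) and (c) your argument is essentially the paper's: it likewise observes that for $2^i \le n < 2^{i+1}$ the base-$2$ expansion of $n - 2^i$ differs from that of $n$ only by the leading bit, so the parity of the digit sum flips and $n$ is odious iff $n - 2^i$ is evil; the disjointness is automatic for the size reason you give. The only divergence is in part (a): the paper proves it directly by pairing the strings $x0$ and $x1$ for each binary string $x$ of length $i-1$ (precisely the fixed-point-free involution $x \mapsto x \oplus 1$ that you mention parenthetically as an alternative), whereas you deduce (a) by induction from (b) and (c) with the base case $i = 1$. Both routes are immediate; yours has the small economy of reusing (b) and (c), while the paper's keeps (a) self-contained and independent of the other two parts.
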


\begin{proof}
\begin{itemize}
\item[(a)]
Let $0 \leq n < 2^i$.  These $n$ can be placed in 1--1 
correspondence with the binary strings $w$ of length $i$, using
the correspondence $[w]_2 = n$.  For each binary string $x$ of
length $i-1$, either $x0$ is odious and $x1$ is evil, or
vice versa.  Thus there are $2^{i-1}$ odious numbers less than
$2^i$, and $2^{i-1}$ evil numbers less than $2^i$.
\item[(b)]  Let $2^i \leq n < 2^{i+1}$.   
Consider $n - 2^i$.  Since the base-$2$
expansion of $n-2^i$ differs from that of $n$ by omitting the first
bit, clearly $n - 2^i$ is evil iff $n$ is odious.  
\item[(c)]  Just like (b).
\end{itemize}
\end{proof}

This gives the following corollary:

\begin{corollary}
For integers $n \geq 0$ and $i \geq 1$
we have $\od (n) \in \mathcal{O}_{2^i}$
and $\ev (n) \in \mathcal{E}_{2^i}$ if and only if $n < 2^{i-1}$. 
Furthermore
\begin{align}
\od (2^{i-1}) &= 2^i;\\
\ev (2^{i-1}) &= 2^i + 1.
\end{align}
\end{corollary}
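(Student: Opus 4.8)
The plan is to deduce everything from part~(a) of Lemma~\ref{two}, together with the fact that, by their very definition, the sequences $(\od(n))_{n \ge 0}$ and $(\ev(n))_{n \ge 0}$ list the odious and evil numbers, respectively, in strictly increasing order. First I would observe that $\mathcal{O}_{2^i}$ consists of exactly the $|\mathcal{O}_{2^i}|$ smallest odious numbers: every odious number less than $2^i$ is smaller than every odious number that is at least $2^i$, so the members of $\mathcal{O}_{2^i}$ form an initial segment of the increasing enumeration $\od$. Consequently $\od(n) \in \mathcal{O}_{2^i}$ precisely when $n < |\mathcal{O}_{2^i}|$, and by Lemma~\ref{two}(a) this bound equals $2^{i-1}$. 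The identical argument, with $\ev$ and $\mathcal{E}_{2^i}$ in place of $\od$ and $\mathcal{O}_{2^i}$, yields the second equivalence. The only subtlety is keeping the inequality strict: membership corresponds to $n < 2^{i-1}$, not $n \le 2^{i-1}$.

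For the two displayed identities I would use the equivalence just established. Since $\od(n) \in \mathcal{O}_{2^i}$ for all $n < 2^{i-1}$ but $\od(2^{i-1}) \notin \mathcal{O}_{2^i}$, monotonicity of $\od$ forces $\od(2^{i-1})$ to be the least odious number that is $\ge 2^i$. The base-$2$ representation of $2^i$ is a single $1$ followed by $i$ zeros, so $2^i$ has an odd number of $1$'s and is therefore odious; hence $\od(2^{i-1}) = 2^i$. Similarly, $\ev(2^{i-1})$ is the least evil number that is $\ge 2^i$; the number $2^i$ is odious (hence not evil), while $2^i + 1$, whose base-$2$ representation is $1\,0^{i-1}1$, has exactly two $1$'s and so is evil. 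As there is no integer strictly between $2^i$ and $2^i+1$, we conclude $\ev(2^{i-1}) = 2^i + 1$.

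I do not anticipate any real obstacle: the corollary is a bookkeeping consequence of the cardinality count in Lemma~\ref{two}(a) and the monotonicity of the two enumerations, and the only thing demanding care is the off-by-one in the threshold. A more self-contained alternative would be an induction on $i$ that uses parts~(b) and~(c) of Lemma~\ref{two} to follow exactly which block each value $\od(n)$ or $\ev(n)$ lands in, but the direct argument above is shorter, and the ``hard part,'' such as it is, amounts to no more than stating the initial-segment observation cleanly.
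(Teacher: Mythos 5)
Your proof is correct and follows the route the paper intends: the paper states this corollary without proof as an immediate consequence of Lemma~\ref{two}, and your argument (the cardinality count in part~(a) plus the fact that $\mathcal{O}_{2^i}$ and $\mathcal{E}_{2^i}$ are initial segments of the increasing enumerations, together with checking that $2^i$ is odious and $2^i+1$ is evil) is exactly the bookkeeping the authors leave to the reader.
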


\subsection{Discriminator of the odious numbers}

We now turn our attention to the discriminators for the evil
and odious numbers, starting with the odious numbers.
First, we need to prove the following useful lemma.

\begin{lemma}
Let $i \geq 1$ and $1 \leq m < 2^i$.  Then there exist two odious numbers
$j, \ell$ with $1 \leq j < \ell \leq 2^i$ such that
$m = \ell - j$.
\label{three}
\end{lemma}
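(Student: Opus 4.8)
The plan is to reduce the statement to the case where $m$ is odd and then exhibit an explicit pair. Write $m = 2^a m'$ with $m'$ odd and $a \geq 0$; since $1 \le m < 2^i$ we get $m' \geq 1$ and $m' < 2^{i-a}$, so in particular $i-a \geq 1$. Multiplying a positive integer by $2^a$ merely appends $a$ zeros to its base-$2$ expansion, so a positive integer $x$ is odious if and only if $2^a x$ is odious. Hence it suffices to find odious $j', \ell'$ with $1 \le j' < \ell' \le 2^{i-a}$ and $\ell' - j' = m'$: then $j := 2^a j'$ and $\ell := 2^a \ell'$ are odious, satisfy $1 \le j < \ell \le 2^i$, and obey $\ell - j = 2^a m' = m$. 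So the whole argument reduces to the odd case, and it needs no induction.

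For the odd number $m'$ I would set $k := (m'+1)/2$, a positive integer with $k \le 2^{i-a-1}$ (since $m' \le 2^{i-a}-1$). The relevant elementary facts are that $|(2k)_2|_1 = |(k)_2|_1$ and $|(2k+1)_2|_1 = |(k)_2|_1 + 1$. From these: if $k$ is odious, then $m'+1 = 2k$ is odious and I take $(j',\ell') = (1,\,m'+1)$; if $k$ is evil, then $m'+2 = 2k+1$ is odious and I take $(j',\ell') = (2,\,m'+2)$. In either case $j'$, which equals $1$ or $2$, is odious, we have $j' < \ell'$ because $\ell' \ge m'+1 \ge 2$, and $\ell' - j' = m'$.

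The one step requiring care is the upper bound $\ell' \le 2^{i-a}$. If $k$ is odious this is immediate, since $\ell' = m'+1 \le 2^{i-a}$. If $k$ is evil I need $\ell' = m'+2 \le 2^{i-a}$, i.e.\ $m' \le 2^{i-a}-2$; this holds because $k$ cannot equal the power of two $2^{i-a-1}$ (such a number is odious), so in fact $k \le 2^{i-a-1}-1$ and thus $m' = 2k-1 \le 2^{i-a}-3$. (This also shows the evil case only arises when $i-a \ge 2$, so the constraint $j'=2$ causes no conflict with $\ell' \le 2^{i-a}$.) Everything remaining — that $j \ge 1$, that $j < \ell$, that $j$ and $\ell$ are odious, and that $\ell - j = m$ — is then routine. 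I do not expect any real obstacle; this boundary bookkeeping in the evil case is the only place where one must be vigilant.
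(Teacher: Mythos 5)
Your proof is correct. Every step checks out: the reduction to odd $m$ via $m = 2^a m'$ is sound (appending zeros preserves the number of $1$'s, and $a < i$ follows from $m < 2^i$), the dichotomy on whether $k = (m'+1)/2$ is odious or evil correctly produces an odious $\ell' = m'+1 = 2k$ or $\ell' = m'+2 = 2k+1$ paired with the odious $j' = 1$ or $j' = 2$, and your handling of the boundary case is exactly right: when $m' = 2^{i-a}-1$ one has $k = 2^{i-a-1}$, a power of two and hence odious, so the potentially problematic evil branch never needs $\ell' > 2^{i-a}$.

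Your route differs from the paper's in organization, though the underlying mechanism is the same: in both arguments $j$ is taken to be a power of two (automatically odious) and $\ell = m + j$, with the work being to show that adding that power of two yields an odious number. The paper does this by a three-way case analysis on the literal binary string $(m)_2$ (the pattern $1^a0^b$, or $x01y$ with $|xy|_1$ odd, or $x01y$ with $|xy|_1$ even), locating a factor $01$ and flipping bits there. You instead strip the trailing zeros first and then decide between $j' = 1$ and $j' = 2$ by the parity of the binary digit sum of $(m'+1)/2$; this amounts to always performing the paper's bit surgery at the lowest-order position, but it avoids string decompositions entirely, collapses the case analysis to two uniform cases, and makes the only delicate point (the upper bound on $\ell'$) fall out of the clean observation that powers of two are odious. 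The paper's version, in exchange, is more directly generalizable to the evil-number analogue (Lemma~\ref{four}), where the additive shifts are no longer single powers of two and the string-pattern viewpoint does more work.
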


\begin{proof}
Let $w = (m)_2$.  There are three cases according to the form of $w$.

\begin{enumerate}

\item No $1$ follows a $0$ in $w$.  Then $w = 1^{a}0^{b}$,
where $a \ge 1$, $b \ge 0$, and $a+b \leq i$.  So $m = 2^b(2^a - 1)$.
Take $\ell = 2^{a+b}$ and $j = 2^b$.

\item $w = x01y$, where $|xy|_1$ is odd.
Take $j = 2^{|y|+1}$.
Then $\ell = m + 2^{|y|+1}$. 
Now $(\ell)_2 = x11y$, and clearly $|x11y|_1$ is odd, so $\ell$ is odious.

\item $w = x01y$, where $|xy|_1$ is even.
Take $j = 2^{|y|}$.  Then
$\ell = m + 2^{|y|}$.
Now $(\ell)_2 = x10y$, and clearly $|x10y|_1$ is odd, so $\ell$ is odious.
\end{enumerate}
\end{proof}

With the help of this lemma, we can compute the discriminator for the sequence of odious numbers.
\begin{theorem}
For the sequence of odious numbers,
the discriminator $D_{\od} (n)$ satisfies the equation
\begin{equation}
D_{\od} (n) = 2^{\lceil \log_2 n \rceil}
\end{equation}
for $n \ge 1$.
\end{theorem}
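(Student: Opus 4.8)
The plan is to establish the two inequalities $D_{\od}(n) \le 2^{\lceil \log_2 n \rceil}$ and $D_{\od}(n) \ge 2^{\lceil \log_2 n \rceil}$ separately. Write $i = \lceil \log_2 n \rceil$. The case $n = 1$ is immediate, since $D_{\od}(1) = 1 = 2^0$, so assume $n \ge 2$, whence $i \ge 1$ and $2^{i-1} < n \le 2^i$. Throughout we use that $D_{\od}$ is non-decreasing: enlarging the prefix $\od(0), \dots, \od(n-1)$ can only increase the least modulus keeping it pairwise incongruent.

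For the upper bound I would first pin down the first $2^i$ odious numbers. By the corollary to Lemma~\ref{two}, the odious numbers less than $2^i$ are exactly $\od(0), \dots, \od(2^{i-1}-1)$, and together they form the set $\mathcal{O}_{2^i}$; moreover $\od(2^{i-1}) = 2^i$. By Lemma~\ref{two}(b) the odious numbers in $[2^i, 2^{i+1})$ are the elements of $\mathcal{E}_{2^i} + 2^i$ listed in increasing order, so $\od(2^{i-1} + j) = \ev(j) + 2^i$ for $0 \le j < 2^{i-1}$. Now reduce the prefix $\od(0), \dots, \od(n-1)$ modulo $2^i$: the terms of index $< 2^{i-1}$ are unchanged and lie in $\mathcal{O}_{2^i}$, while the terms of index $\ge 2^{i-1}$ become $\ev(0), \ev(1), \dots, \ev(n - 2^{i-1} - 1)$, which are distinct elements of $\mathcal{E}_{2^i}$ because $n - 2^{i-1} \le 2^{i-1} = |\mathcal{E}_{2^i}|$ and $\ev$ is injective. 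Since $\mathcal{O}_{2^i} \sqcup \mathcal{E}_{2^i} = \{0, 1, \dots, 2^i - 1\}$, the residues so obtained are pairwise distinct, hence $2^i$ discriminates $\od(0), \dots, \od(n-1)$ and $D_{\od}(n) \le 2^i$.

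For the lower bound, since $D_{\od}$ is non-decreasing and $n \ge 2^{i-1} + 1$, it suffices to prove that no integer $m$ with $1 \le m < 2^i$ discriminates the first $2^{i-1} + 1$ odious numbers $\od(0), \dots, \od(2^{i-1})$. As $\od$ is strictly increasing with $\od(0) = 1$ and $\od(2^{i-1}) = 2^i$, this list is precisely the set of all odious numbers in the interval $[1, 2^i]$. Fix $m$ with $1 \le m < 2^i$. By Lemma~\ref{three} there exist odious numbers $j, \ell$ with $1 \le j < \ell \le 2^i$ and $\ell - j = m$; both $j$ and $\ell$ appear in the list, and $j \equiv \ell \pmod m$, so $m$ fails to discriminate it. Therefore $D_{\od}(2^{i-1}+1) \ge 2^i$, and hence $D_{\od}(n) \ge 2^i$, completing the proof.

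The technical heart of the argument is Lemma~\ref{three}, which is already in hand; granting it, the only real care needed is the bookkeeping in the upper bound — verifying that the first $2^i$ odious numbers are exactly the described set, and that reducing modulo $2^i$ sends the two halves of the prefix into the disjoint sets $\mathcal{O}_{2^i}$ and $\mathcal{E}_{2^i}$ — which is routine from Lemma~\ref{two} and its corollary. I anticipate no genuine obstacle beyond being careful with the index ranges, in particular the boundary case $n = 2^i$, where the evil residues exhaust all of $\mathcal{E}_{2^i}$.
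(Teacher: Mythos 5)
Your proof is correct and uses the same key ingredients as the paper's: Lemma~\ref{three} for the lower bound and the decomposition of Lemma~\ref{two}(b) (reduced modulo a power of $2$) for the upper bound. The only difference is organizational — the paper splits into the cases $n = 2^i+1$ and $2^i+1 < n \le 2^{i+1}$, giving a separate direct upper-bound argument in the first case, whereas you prove the upper bound uniformly for all $n$ in the dyadic range and get the lower bound once at the left endpoint plus monotonicity, which is a mild streamlining rather than a different route.
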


\begin{proof}
The cases $n = 1, 2$ are left to the reader.
Otherwise, let $i \geq 1$ be such that
$2^i < n \leq 2^{i+1}$.  We show $D_{\od} (n) = 2^{i+1}$.
There are two cases:

\medskip

Case 1:  $n = 2^i + 1$.   We must compute the discriminator of
$\od(0), \od(1), \ldots, \od(2^i) = 2^{i+1}$.  
By Lemma~\ref{three}, for each $m < 2^{i+1}$, there exist
odious numbers $j, \ell$ with $1 \leq j \leq l \leq 2^{i+1}$
with $\ell - j = m$.  So the numbers
$\od(0), \od(1), \ldots, \od(2^i)$ cannot be pairwise incongruent
mod $m$ for $m < 2^{i+1}$.  On the other hand, since
$0$ is not odious and each of the numbers
$\od(0), \od(1), \ldots, \od(2^i)$ are less than $2^{i+1}$ except the very
last (which is $0$ mod $2^{i+1})$, clearly $2^{i+1}$ 
discriminates $\od(0), \od(1), \ldots, \od(2^i)$.  

\medskip

Case 2:  $2^i + 1 <  n \leq 2^{i+1}$.  Since the discriminator is
nondecreasing, we know $D_{\od} (n) \geq 2^{i+1}$.  It suffices to show
that $2^{i+1}$ discriminates
$\mathcal{O}_{2^{i+2}} = \{ \od(0), \od(1), \ldots, \od(2^{i+1} - 1) \}$.    
Now from Lemma~\ref{two}(b), we have 
$$\mathcal{O}_{2^{i+2}} = 
\mathcal{O}_{2^{i+1}} \sqcup (\mathcal{E}_{2^{i+1}} + 2^{i+1}).$$
If we now take both sides modulo $2^{i+1}$, we see that the
right-hand side is just 
$\mathcal{O}_{2^{i+1}} \sqcup \mathcal{E}_{2^{i+1}}$,
which represents all integers in the range $[0, 2^{i+1})$.

\end{proof}

Empirically, many interesting sequences of positive integers seem to have
discriminator $2^{\lceil \log_2 n \rceil}$.  However, of all such
sequences, the odious numbers play a special role:  they are the
lexicographically least.

\begin{theorem}
The sequence of odious numbers is the lexicographically least increasing
sequence of positive integers $\bf s$ such that
$D_{s} (n) = 2^{\lceil \log_2 n \rceil}$.
\end{theorem}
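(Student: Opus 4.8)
The plan is to argue by contradiction. The preceding theorem shows that the odious sequence $\od$ itself satisfies $D_{\od}(n)=2^{\lceil\log_2 n\rceil}$ for all $n\ge 1$, so it suffices to show that no increasing sequence $\mathbf{s}=(s(n))_{n\ge 0}$ of positive integers with $D_{\mathbf{s}}(n)=2^{\lceil\log_2 n\rceil}$ for all $n\ge 1$ is lexicographically smaller than $\od$. Suppose such an $\mathbf{s}$ exists, and let $i$ be the least index with $s(i)\ne\od(i)$; then $s(j)=\od(j)$ for $j<i$ and $s(i)<\od(i)$. Since $s(0)\ge 1=\od(0)$ we must have $i\ge 1$, and since $\mathbf{s}$ is strictly increasing with $s(i-1)=\od(i-1)$ we get $\od(i-1)<s(i)<\od(i)$; thus $s(i)$ lies strictly between two consecutive odious numbers and is therefore evil. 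The only consequence of the hypothesis I will use is the value of $D_{\mathbf{s}}$ at $n=i+1$: writing $A=\{\od(0),\dots,\od(i-1),s(i)\}$ and $M=2^{\lceil\log_2(i+1)\rceil}$ (the least power of $2$ that is at least $i+1$), the number $M$ must be the least modulus for which the $i+1$ elements of $A$ are pairwise incongruent.

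First suppose $i$ is not a power of $2$, say $2^a<i<2^{a+1}$, and set $r=i-2^a$, so $1\le r\le 2^a-1$ and $M=2^{a+1}$. By the corollary above, $\od(0),\dots,\od(2^a-1)$ are precisely the odious numbers less than $2^{a+1}$, i.e.\ the elements of $\mathcal{O}_{2^{a+1}}$; and Lemma~\ref{two}(b), with $a+1$ in place of $i$, gives $\mathcal{O}_{2^{a+2}}=\mathcal{O}_{2^{a+1}}\sqcup(\mathcal{E}_{2^{a+1}}+2^{a+1})$, which shows that the subsequent odious numbers are exactly $2^{a+1}+\ev(t)$ for $t=0,1,2,\dots$; that is, $\od(2^a+t)=2^{a+1}+\ev(t)$ for $0\le t\le 2^a-1$. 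In particular $\od(i-1)=2^{a+1}+\ev(r-1)$ and $\od(i)=2^{a+1}+\ev(r)$, so $s(i)-2^{a+1}$ lies strictly between the consecutive evil numbers $\ev(r-1)$ and $\ev(r)$ (such a number must exist, or else $s(i)$ could not), whence $s(i)-2^{a+1}$ is odious and lies in $(0,2^{a+1})$, so $s(i)-2^{a+1}=\od(j)$ for some $j<i$. Then $\od(j)\equiv s(i)\pmod{2^{a+1}}$, so the elements of $A$ are not pairwise incongruent modulo $M=2^{a+1}$, contradicting $D_{\mathbf{s}}(i+1)=M$.

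Now suppose $i=2^a$ is a power of $2$, so $M=2^{a+1}$, $\{\od(0),\dots,\od(i-1)\}=\mathcal{O}_{2^{a+1}}$, and $\od(i)=2^{a+1}$ by the corollary. Since $2^{a+1}-1$ and $2^{a+1}-2$ differ in exactly one binary digit, exactly one of them is odious, so $\od(i-1)=\max\mathcal{O}_{2^{a+1}}$ equals $2^{a+1}-1$ if $a$ is even and $2^{a+1}-2$ if $a$ is odd. If $a$ is even then $\od(i-1)=2^{a+1}-1=\od(i)-1$, so no integer lies strictly between $\od(i-1)$ and $\od(i)$ --- impossible, as $\od(i-1)<s(i)<\od(i)$. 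If $a$ is odd then $s(i)=2^{a+1}-1$ is forced, so $A=\mathcal{O}_{2^{a+1}}\cup\{2^{a+1}-1\}\subseteq\{1,2,\dots,2^{a+1}-1\}$; since $0\notin A$ and the only pair of elements of $\{0,1,\dots,2^{a+1}-1\}$ congruent modulo $2^{a+1}-1$ is $\{0,\,2^{a+1}-1\}$, reduction modulo $2^{a+1}-1$ is injective on $A$. Hence $2^{a+1}-1$ discriminates $A$, so $D_{\mathbf{s}}(i+1)\le 2^{a+1}-1<M$, again a contradiction. These cases are exhaustive, so $\od$ is lexicographically least.

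The difficulty here is not any single hard step but the bookkeeping: identifying $\{\od(0),\dots,\od(i-1)\}$, $\od(i-1)$, and $\od(i)$ in terms of $a$ (and $r$), keeping the $\lceil\log_2\rceil$ arithmetic straight across powers of $2$, and making sure the degenerate configurations --- precisely those in which the two relevant consecutive numbers differ by $1$, so that $s(i)$ cannot exist --- are folded into the argument rather than overlooked. Once the cases are arranged as above, each collapses to a one-line congruence or injectivity observation, and the hypothesis is invoked only at the single index $n=i+1$.
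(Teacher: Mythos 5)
Your proof is correct. It arrives at the same core contradiction as the paper's --- a collision modulo $2^{a+1}$ produced by stripping the leading bit of the offending term $s(i)$, which leaves an odious number already present in the prefix --- but it establishes the crucial containment $2^{a+1} < s(i) < 2^{a+2}$ by a different mechanism. The paper observes that any term of an increasing sequence of positive integers discriminates its own prefix, so $s(i) \geq D_{\bf s}(i+1) = 2^{a+1}$ immediately and uniformly, with no case distinction on the index. You instead compute the two odious neighbours of $s(i)$ explicitly via the identity $\od(2^a+t)=2^{a+1}+\ev(t)$ (extracted from Lemma~\ref{two}(b)), which yields the same interval when $2^a < i < 2^{a+1}$ but forces a separate treatment of the boundary case $i=2^a$; there your argument is genuinely different from anything in the paper: for even $a$ you show no candidate $s(i)$ exists at all, and for odd $a$ you exhibit the smaller modulus $2^{a+1}-1$ that discriminates the prefix, contradicting the hypothesis from below rather than from above. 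Both arguments are complete. The paper's self-discrimination observation buys brevity and avoids the case split entirely; your version makes the local structure of the odious sequence near powers of two explicit, and in exchange has to do more bookkeeping, all of which you carry out correctly.
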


\begin{proof}
We prove this by contradiction. Suppose there exists a sequence of
increasing positive integers, $s(0), s(1), \ldots$, that is
lexicographically smaller than the sequence of odious numbers but
shares the same discriminator, $D_{s} (n) = 2^{\lceil \log_2 n
\rceil}$.

Let $j$ denote the first index such that $s(j) \neq \od(j)$, i.e.,
$s(j) < \od(j)$, since $s$ is a lexicographically smaller sequence than
the odious numbers. We can see that $s(j)$ must be evil, because
$\od(j)$ is the next odious number after $\od(j - 1) = s(j - 1)$. Note
that since $\od(0) = 1$ is the smallest positive integer, necessarily
$j \geq 1$.

Now let $i \geq 0$ be such that $2^i \leq j < 2^{i + 1}$.
In that case, the discriminator of the sequence $s(0), s(1), \ldots,
s(j)$ is $D_{s} (j + 1) = 2^{\lceil \log_2 (j + 1) \rceil} = 2^{i
+ 1}$. However, $s(j)$ also discriminates this sequence, which
implies that $s(j) \geq D_s (j + 1) = 2^{i + 1}$. Note that by
the definition of $j$, this means that all odious numbers less than $2^{i +
1}$ are present in the sequence $s(0), s(1), \ldots, s(j)$.

Furthermore, we have $s(j) < \od(j) < \od(2^{i + 1}) = 2^{i + 2}$.
So $2^{i + 1} \leq s(j) <
2^{i + 2}$, which means that the largest power of $2$ appearing in the
binary representation of $s(j)$ is $2^{i+1}$.
Therefore $s(j) \bmod 2^{i +
1} = s(j) - 2^{i + 1}$ is odious.
However, $s(j) \bmod 2^{i + 1} < 2^{i + 1}$.
But the sequence $s(0), s(1),
\ldots, s(j)$ contains all odious numbers less than $2^{i + 1}$, which
therefore includes the result of $s(j) \bmod 2^{i + 1}$. In other
words, $s(j)$ is congruent to another number in this sequence modulo
$2^{i + 1}$, i.e., $D_s (j + 1) \neq 2^{i + 1}$, which is a
contradiction.
\end{proof}

\subsection{Discriminator of the evil numbers}

We now focus on the discriminator for the sequence of evil numbers. Here, we need to utilize a similar lemma as before.

\begin{lemma}
Let $i \geq 3$ and $1 \leq m < 2^i - 3$.  Then there exist two evil numbers
$j, \ell$ with $0 \leq j < \ell \leq 2^i + 1$ such that
$m = \ell - j$.
\label{four}
\end{lemma}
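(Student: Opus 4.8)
The plan is to follow the proof of Lemma~\ref{three}, but the construction is more involved because its natural anchors --- powers of $2$ --- are odious, not evil. First dispose of the easy case: if $m$ is itself evil, take $j = 0$ and $\ell = m$; then $0 \le j < \ell = m \le 2^i - 4 < 2^i + 1$ and both endpoints are evil. So from now on assume $m$ is odious, and argue by cases on the binary string $w = (m)_2$, exactly as in Lemma~\ref{three}.

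\emph{Case $w = 1^a 0^b$.} Here $m = 2^{a+b} - 2^b$ and $a$ is odd (else $m$ would be evil). When $b \ge 1$, take $j = 2^b + 1$ and $\ell = 2^{a+b} + 1$: both have exactly two $1$-bits, so both are evil, $\ell - j = m$, and $\ell \le 2^i + 1$ since $a+b \le i$; when $a+b=i$ this is the pair $(2^b+1,\,2^i+1)$, which is legitimate precisely because $1 \le m < 2^i - 3$ forces $b \ge 2$. When $b = 0$ and $a \ge 3$, take $j = 3$, $\ell = 2^a + 2$; the lone remaining instance $m = 1$ is handled by $(5,6)$. This is one place where the constants $2^i-3$ and $2^i+1$ of the statement are used.

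\emph{Case $w$ contains $01$.} Write $w = x01y$ with $|xy|_1$ even (so that $m$ is odious); since $(m)_2$ has no leading zero, $x$ is nonempty and begins with $1$. As in Lemma~\ref{three}, flipping that $0$ to $1$ gives $N := m + 2^{|y|+1}$, whose representation $x11y$ has an even number of $1$'s, so $N$ is evil. The pair $(2^{|y|+1},N)$ fails only because $2^{|y|+1}$ is odious, so add a correcting power to both endpoints: set $c := |w| - 1$ (the top bit of $N$), and take $j := 2^{|y|+1} + 2^c$ and $\ell := N + 2^c$. Since $x \ne \varepsilon$ we have $c \ge |y| + 2 > |y|+1$, so $j$ has two distinct $1$-bits and is evil; adding $2^c$ at the top bit of $N$ produces exactly one carry, which preserves the parity of the digit sum, so $\ell$ is evil as well; and $\ell - j = N - 2^{|y|+1} = m$. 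When $m < 2^{i-1}$ we have $|w| \le i-1$, hence $\ell = N + 2^{|w|-1} < \tfrac{3}{2}\cdot 2^{i-1} < 2^i + 1$, and this case is finished.

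\emph{The remaining --- and hardest --- case is a large odious $m$, say $2^{i-1} \le m \le 2^i - 4$.} Write $m = 2^{i-1} + a$, so $0 \le a \le 2^{i-1} - 4$ and $a$ is evil. Writing $\ell = j + m = (j+a) + 2^{i-1}$, one sees that $\ell$ is evil whenever $j + a$ is odious and $j+a < 2^{i-1}$, and also whenever $j + a = 2^{i-1} + 1$ (giving the evil $\ell = 2^i+1$). So it suffices to exhibit an evil $j$ with either $j + a$ odious and $< 2^{i-1}$, or $j = 2^{i-1}+1-a$ evil; the case $a = 0$ is just $(2^{i-1}+1,\,2^i+1)$. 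Taking $j = 2^h - a$ with $a < 2^h$ and $2^h - a < 2^{i-1}$ makes $j + a = 2^h$ odious, and a short digit computation shows $j$ is evil for roughly half of the admissible exponents $h$; at least two admissible $h$ exist unless $a$ has top bit $i-2$, and in that last case one checks directly that whenever $2^{i-1}-a$ is odious, $2^{i-1}+1-a$ is evil, so the second option applies. I expect this last case to be the main obstacle: reconciling the size bound $\ell \le 2^i+1$ (which leaves almost no room when $m$ is close to $2^i$) with the requirement that the number of carries in $j+m$ have the correct parity. Isolating the boundary instances and resolving them via the anchor $\ell = 2^i+1$ is what makes the bookkeeping close, and the finitely many tiny cases are verified directly, as in the base case $i = 3$.
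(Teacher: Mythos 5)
Your cases for evil $m$, for $m$ of the form $1^a0^b$ in binary, and for odious $m < 2^{i-1}$ containing the factor $01$ are all correct; the last of these (flip the $0$ via $2^{|y|+1}$, then add $2^{|w|-1}$ to both endpoints to repair the parity) is a legitimate variant of the paper's argument. But your final case --- odious $m$ with $2^{i-1} \le m \le 2^i-4$ --- has a genuine gap, exactly where you predicted trouble. Write $m = 2^{i-1}+a$. First, the claim that at least two admissible exponents $h$ exist unless $a$ has top bit $i-2$ is false: when the top bit of $a$ is at position $i-3$ there is exactly one admissible $h$ (namely $h=i-2$, since you need $2^h > a$ and $2^h < 2^{i-1}$ for $j+a$ to stay below $2^{i-1}$), and a single candidate gives you no parity room. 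Concretely, for $i=5$, $m=22$, $a=6$, the only candidate is $j = 2^3-6 = 2$, which is odious, and your fallback is not even invoked since the top bit of $a$ is $i-3$. Second, the fallback itself --- ``whenever $2^{i-1}-a$ is odious, $2^{i-1}+1-a$ is evil'' --- is false: for $i=5$, $m=25$, $a=9$, both $2^{i-1}-a=7$ and $2^{i-1}+1-a=8$ are odious. (The lemma still holds for these $m$: $22 = 27-5$ and $25=30-5$ with $5, 27, 30$ all evil; it is your construction that fails, not the statement.) Note also that your literal admissibility condition $2^h - a < 2^{i-1}$ would allow $h=i-1$, but then $\ell = 2^{i-1}+2^{i-1} = 2^i$ is odious, so that does not rescue these instances either.

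The paper sidesteps this entire difficulty by never adding a bit above the top of $w$: when $m$ is odious and contains a $01$, it locates a \emph{second} $0$ in $w$ (the cases $w=x01y0z$ and $w=x0y01z$) and adds two powers of $2$ that each flip a bit in place, so that $(\ell)_2$ has the same length as $(m)_2$ and $\ell < 2^i$ automatically; the few patterns with at most one $0$ (namely $1^a$, $1^a0^b$, $1^a01$, $1^a01^b$) are dispatched with explicit small anchors $j \in \{3, 5, 2^b+1, 2^b+2\}$. To salvage your approach you would need to treat all large odious $m$ by such in-place bit manipulations rather than by the $j = 2^h - a$ device.
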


\begin{proof}
Let $w = (m)_2$.  There are several cases according to the form of $w$.

\begin{enumerate}

\item The number $m$ is evil. Take $\ell = m$ and $j = 0$.

\item There are no $0$'s in $w$. Then $m = 2^a - 1$ where $0 < a < i$. Note that $a \neq i$. If $m = 1$, then take $\ell = 6$ and $j = 5$. Otherwise, take $\ell = 2^a + 2$ and $j = 3$.

\item No $1$ follows a $0$ in $w$ and $|w|_0 > 0$.  Then $w = 1^{a}0^{b}$, where $a \ge 1$, $b \ge 1$, and $a + b \leq i$.  So $m = 2^b(2^a - 1)$. Take $\ell = 2^{a+b} + 1$ and $j = 2^b + 1$.

\item There is exactly one 0 in $w$ and $w$ ends with 01. Then $w = 1^{a}01$, where $1 \le a \le i - 3$. So $m = 2^{a + 2} - 3$. Take $\ell = 2^{a + 2} + 2$ and $j = 5$.

\item There is exactly one 0 in $w$ and $w$ ends with 11. Then $w = 1^{a}01^{b}$, where $a \ge 1$, $b \ge 2$, and $a + b \leq i - 1$. So $m = 2^{a + b + 1} - 2^b - 1$. Take $\ell = 2^{a + b + 1} + 1$ and $j = 2^b + 2$.

\item $w = x01y0z$, where $|xyz|_1$ is even. Take $j = 2^{|y| + |z| + 1} + 2^{|z|}$. Then $\ell = m + 2^{|y| + |z| + 1} + 2^{|z|}$. Now $(\ell)_2 = x10y1z$. We can see $|x10y1z|_1$ is even, so $\ell$ is evil.

\item $w = x0y01z$, where $|xyz|_1$ is even. Take $j = 2^{|y| + |z| + 2} + 2^{|z|}$. Then $\ell = m + 2^{|y| + |z| + 2} + 2^{|z|}$. Now $(\ell)_2 = x1y10z$. We can see $|x1y10z|_1$ is even, so $\ell$ is evil.
\end{enumerate}
\end{proof}

With the help of this lemma, we can compute the discriminator for the sequence of evil numbers.
\begin{theorem}
For the sequence of evil numbers, the discriminator $D_{\ev} (n)$ 
satisfies the equation
\begin{equation}
D_{\ev} (n) = \begin{cases}
2^{i + 1} - 3, & \text{if } n = 2^i + 1 \text{ for odd } i \geq 2;\\
2^{i + 1} - 1, & \text{if } n = 2^i + 1 \text{ for even } i \geq 2;\\
2^{\lceil \log_2 n \rceil}, & \text{otherwise,} 
\end{cases}
\end{equation}
for $n \ge 1$.
\end{theorem}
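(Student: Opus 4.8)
The plan is to mirror the structure of the preceding theorem on the odious numbers. Given $n\ge 5$, fix the unique $i\ge 2$ with $2^i<n\le 2^{i+1}$, so that $\lceil\log_2 n\rceil=i+1$, and first describe the prefix $\ev(0),\ev(1),\dots,\ev(n-1)$ explicitly via the corollary: for $n\le 2^{i+1}-1$ this prefix is a subset of $\mathcal{E}_{2^{i+2}}=\{\ev(0),\dots,\ev(2^{i+1}-1)\}$ (the set of all evil numbers $<2^{i+2}$), and it contains $\ev(2^i)=2^{i+1}+1$ exactly when $n\ge 2^i+1$ and $\ev(2^i+1)=2^{i+1}+2$ exactly when $n\ge 2^i+2$. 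The small cases $1\le n\le 4$ (all of which fall under ``otherwise'') are checked by hand. This leaves two regimes: the \emph{generic} range $2^i+2\le n\le 2^{i+1}$ with $i\ge 2$, and the \emph{exceptional} values $n=2^i+1$ with $i\ge 2$.

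For the generic range the upper bound $D_{\ev}(n)\le 2^{i+1}$ follows from Lemma~\ref{two}(c): reducing the disjoint union $\mathcal{E}_{2^{i+2}}=\mathcal{E}_{2^{i+1}}\sqcup(\mathcal{O}_{2^{i+1}}+2^{i+1})$ modulo $2^{i+1}$ gives $\mathcal{E}_{2^{i+1}}\sqcup\mathcal{O}_{2^{i+1}}$, which by Lemma~\ref{two}(a) is all of $\{0,1,\dots,2^{i+1}-1\}$; hence the elements of $\mathcal{E}_{2^{i+2}}$ are pairwise incongruent mod $2^{i+1}$, and so is any prefix of them. For the matching lower bound, Lemma~\ref{four} applied with $i$ replaced by $i+1$ (valid since $i\ge 2$) shows every $m$ with $1\le m<2^{i+1}-3$ fails, since it realizes $\ell-j$ for two evil numbers $j<\ell\le 2^{i+1}+1$, both of which lie in the prefix once $n\ge 2^i+1$. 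The three remaining moduli are eliminated by exhibiting congruent pairs inside $\{3,5,2^{i+1}+1,2^{i+1}+2\}=\{\ev(1),\ev(2),\ev(2^i),\ev(2^i+1)\}\subseteq\{\ev(0),\dots,\ev(2^i+1)\}$: for $m=2^{i+1}-2$ use $\{3,2^{i+1}+1\}$, for $m=2^{i+1}-1$ use $\{3,2^{i+1}+2\}$, and for $m=2^{i+1}-3$ use $\{5,2^{i+1}+2\}$ (in each case the two evil numbers differ by exactly $m$). Monotonicity of the discriminator then propagates $D_{\ev}(n)\ge 2^{i+1}$ across the whole generic range.

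For $n=2^i+1$ the prefix is precisely the set of evil numbers $\le 2^{i+1}+1$; crucially it does \emph{not} contain $\ev(2^i+1)=2^{i+1}+2$, so the pairs that killed $m=2^{i+1}-3$ and $m=2^{i+1}-1$ in the generic case are unavailable, and the answer splits on the parity of $i$. As before, Lemma~\ref{four} rules out all $m<2^{i+1}-3$, and $\{3,2^{i+1}+1\}$ rules out $m=2^{i+1}-2$. The point is that $2^{i+1}-3$ has base-$2$ representation $1^{i-1}01$ (weight $i$) and $2^{i+1}-1=1^{i+1}$ (weight $i+1$), so $2^{i+1}-3$ is evil iff $i$ is even and $2^{i+1}-1$ is evil iff $i$ is odd. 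If $i$ is even, then $2^{i+1}-3$ is itself an evil number $<2^{i+1}$ in the prefix with $2^{i+1}-3\equiv 0=\ev(0)$, so it fails; and $m=2^{i+1}-1$ works, because all prefix elements are $\le 2^{i+1}+1<2(2^{i+1}-1)$, so a congruence forces a difference exactly $2^{i+1}-1$, hence the smaller element is an evil number $\le 2$, i.e.\ $0$, and the larger is $2^{i+1}-1$, which is odious --- contradiction. If $i$ is odd, the same difference bound shows a congruence mod $2^{i+1}-3$ forces a difference exactly $2^{i+1}-3$ with smaller element an evil number $\le 4$, i.e.\ $0$ or $3$, giving larger element $2^{i+1}-3$ or $2^{i+1}$, both odious --- so $m=2^{i+1}-3$ discriminates and (being above the Lemma~\ref{four} threshold) is minimal.

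The main obstacle is the analysis of the three largest moduli below $2^{i+1}$, namely $2^{i+1}-3$, $2^{i+1}-2$, $2^{i+1}-1$: Lemma~\ref{four} disposes of the bulk of the range at one stroke, but these last few values must be handled individually, and it is exactly here that the exceptional $n=2^i+1$ appear and the parity of $i$ enters --- the distinction being whether $\ev(2^i+1)=2^{i+1}+2$ has yet joined the prefix. A secondary bookkeeping task is to confirm for each of these moduli that no difference of two prefix elements reaches $2m$ (so that a congruence genuinely forces a difference equal to $m$), which holds for all $i\ge 2$ but should be recorded, along with the explicit verification of the small cases $n\le 4$.
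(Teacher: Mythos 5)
Your proposal is correct and follows essentially the same strategy as the paper's proof: Lemma~\ref{four} (shifted to $i+1$) eliminates all moduli below $2^{i+1}-3$, the three remaining moduli $2^{i+1}-3$, $2^{i+1}-2$, $2^{i+1}-1$ are handled by explicit evil pairs and the parity of the binary weights of $2^{i+1}-3$ and $2^{i+1}-1$, and Lemma~\ref{two}(c) gives the upper bound $2^{i+1}$ in the generic range. Your difference-bound phrasing (``no gap reaches $2m$, so a collision forces a difference of exactly $m$'') is a cosmetic variant of the paper's residue computations and is equally valid.
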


\begin{proof}
The cases $n = 1, 2, 3, 4$ are left to the reader.
Otherwise, let $i \geq 2$ be such that
$2^i < n \leq 2^{i+1}$.  We show $D_{\ev} (n)$ satisfies the given equation.
There are three cases presented in the equation:

\medskip

Case 1:  $n = 2^i + 1$ for odd $i \geq 2$.   We must compute the discriminator of $\ev(0), \ev(1), \ldots,$ $\ev(2^i) = 2^{i+1} + 1$.  
By Lemma~\ref{four}, for each $m < 2^{i+1} - 3$, there exist
evil numbers $j, \ell$ with $1 \leq j \leq l \leq 2^{i+1} + 1$
with $\ell - j = m$.  So the numbers
$\ev(0), \ev(1), \ldots, \ev(2^i)$ cannot be pairwise incongruent
mod $m$ for $m < 2^{i + 1} - 3$. 
Note that for odd $i \geq 2$, the only evil numbers in the range $[2^{i + 1} - 3, 2^{i + 1} + 1]$ are $2^{i + 1} - 1$ and $2^{i + 1} + 1$, easily observed from their binary representations. We can see that $2^{i + 1} - 1 
\equiv \modd {2} {2^{i + 1} - 3}$ and
$2^{i + 1} + 1 \equiv \modd{4} {2^{i + 1} - 3}$,
where neither $2$ nor $4$ are evil.
All the other numbers in the sequence $\ev(0), \ev(1), \ldots, \ev(2^i)$ are less than $2^{i + 1} - 3$, and thus it is clear that $2^{i + 1} - 3$ discriminates $\ev(0), \ev(1), \ldots, \ev(2^i)$.

\medskip

Case 2:  $n = 2^i + 1$ for even $i \geq 2$.   We must compute the discriminator of $\ev(0), \ev(1), \ldots,$ $\ev(2^i) = 2^{i+1} + 1$.  
Just as in the previous case, Lemma~\ref{four} ensures that the numbers
$\ev(0), \ev(1), \ldots, \ev(2^i)$ cannot be pairwise incongruent
mod $m$ for $m < 2^{i + 1} - 3$. For even $i \geq 2$, we can see that both $2^{i + 1} - 3$ and $2^{i + 1} - 2$ are evil from their binary representations. Neither of them can discriminate the sequence since $m \bmod m = 0$ for either $m = 2^{i + 1} - 3$ or $m = 2^{i + 1} - 2$, while $0$ is evil. Thus the discriminator must be at least $2^{i + 1} - 1$. Since neither $2^{i + 1} - 1$ nor $2^{i + 1}$ are evil, we can see that each of the numbers $\ev(0), \ev(1), \ldots, \ev(2^i)$ are all less than $2^{i + 1} - 1$ except the very last, which is $2^{i + 1} + 1 = 2 \mod (2^{i + 1} - 1)$, where $2$ is not evil. Therefore, it is clear that $2^{i + 1} - 1$ discriminates $\ev(0), \ev(1), \ldots, \ev(2^i)$.

\medskip

Case 3:  $2^i + 1 <  n \leq 2^{i+1}$. From the previous two cases, we know that $D_{\ev} (2^i + 1)$ is either $2^{i + 1} - 3$ or $2^{i + 1} - 1$. Since the discriminator is nondecreasing, we know $D_{\ev} (n) \geq 2^{i+1} - 3$. We can see that the sequence $\ev(0), \ev(1), \ldots, \ev(n - 1)$ must include $\ev(2^i + 2) = 2^{i + 1} + 2$, the next evil number after $2^{i + 1} + 1$. We then observe that
\begin{align*}
&2^{i + 1} + 2 \equiv \modd{5} {2^{i + 1} - 3},\\
&2^{i + 1} + 1 \equiv \modd{3} {2^{i + 1} - 2},\\
&2^{i + 1} + 2 \equiv \modd{3} {2^{i + 1} - 1},
\end{align*}
where the numbers $3$ and $5$ are evil. Therefore, the discriminator must be at least $2^{i + 1}$. It suffices to show that $2^{i + 1}$ discriminates
$\mathcal{E}_{2^{i+2}} = \{ \ev(0), \ev(1), \ldots, \ev(2^{i + 1} - 1) \}$.    
Now from Lemma~\ref{two}(c), we have 
$$\mathcal{E}_{2^{i+2}} = 
\mathcal{E}_{2^{i+1}} \sqcup (\mathcal{O}_{2^{i+1}} + 2^{i+1}).$$
If we now take both sides modulo $2^{i+1}$, we see that the
right-hand side is just 
$\mathcal{E}_{2^{i+1}} \sqcup \mathcal{O}_{2^{i+1}}$,
which represents all integers in the range $[0, 2^{i+1})$. Thus we have $D_{\ev} (n) = 2^{i + 1} = 2^{\lceil \log_2 n \rceil}$ for $2^i + 1 <  n \leq 2^{i+1}$.

\end{proof}

\section{A $k$-regular sequence whose discriminator is not $k$-regular}

Consider the sequence $1, 4, 9, 16, \ldots$ of perfect squares.
From \cite[Example 5]{Allouche&Shallit:1992}, 
this sequence is $k$-regular for all integers $k \geq 2$.  We show

\begin{theorem}
The discriminator sequence of the perfect squares is not $k$-regular
for any $k$.
\end{theorem}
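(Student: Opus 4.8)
The plan is to fix an arbitrary integer $k \ge 2$, assume $D_{\rm sq}$ is $k$-regular, and derive a contradiction from a subsequence that is forced to be linearly recurrent yet provably is not. Using the matrix representation of a $k$-regular sequence, write $D_{\rm sq}(n) = u\,\mu(v)\,w$ for every word $v$ with $[v]_k = n$; taking $v = 1\,0^e$ gives $D_{\rm sq}(k^e) = u\,\mu(1)\,\mu(0)^e\,w$, so $(D_{\rm sq}(k^e))_{e \ge 0}$ is linearly recurrent, with characteristic polynomial dividing that of $\mu(0)$. By the Arnold--Benkoski--McCabe formula, for large $e$ the value $D_{\rm sq}(k^e)$ is the least element $\ge 2k^e$ of the set $P := \{p : p\ \text{prime}\} \cup \{2p : p\ \text{prime}\}$, and by the prime number theorem this is $2k^e(1+o(1))$. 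Hence in the exponential--polynomial form of the recurrence the root $k$ appears with coefficient $2$, and since no other occurring root can have modulus $\ge k$ (otherwise $D_{\rm sq}(k^e)/k^e$ would not tend to $2$), the sequence $\delta_e := D_{\rm sq}(k^e) - 2k^e$ is itself linearly recurrent, with all characteristic roots of modulus $< k$; thus $1 \le \delta_e \le C\rho^e$ for large $e$ and some $\rho < k$ (here $\delta_e \ge 1$ because $2k^e$ is even while $k^e$ is composite, so $2k^e \notin P$). Note that $\delta_e$ is precisely the distance from $2k^e$ to the next member of $P$.

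The heart of the argument is to show $(\delta_e)_{e\ge0}$ cannot be linearly recurrent. Suppose first that some value $c$ is attained by $\delta_e$ for infinitely many $e$. By the Skolem--Mahler--Lech theorem applied to the linearly recurrent sequence $(\delta_e - c)_{e\ge0}$, the set $\{e : \delta_e = c\}$ is eventually periodic and hence contains an infinite arithmetic progression $e = m + Lt$ with $L \ge 1$. Along it $D_{\rm sq}(k^{m+Lt}) = 2k^m (k^L)^t + c$ lies in $P$ for all large $t$; writing $K = k^L \ge 2$, this says that $2k^m K^t + c$ is prime for all large $t$ (when $c$ is odd) or that $k^m K^t + c/2$ is prime for all large $t$ (when $c$ is even). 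But for integers $a \ge 1$, $b \ge 1$, $K \ge 2$ the sequence $(aK^t + b)_{t\ge0}$ is composite for infinitely many $t$: fix any $t_0 \ge 1$ and a prime $q \mid aK^{t_0} + b$; if $q \mid K$ then $q \mid b$, so $q \mid aK^t + b$ for all $t \ge 1$, and otherwise $q \mid aK^t+b$ whenever $t \equiv t_0 \pmod{\mathrm{ord}_q(K)}$; in either case $aK^t + b$ is a proper multiple of $q$ for infinitely many $t$. This contradiction disposes of this case.

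The remaining case, $\delta_e \to \infty$, is the main obstacle: it corresponds to the error term of the recurrence for $D_{\rm sq}(k^e)$ genuinely growing, and forces the distance from $2k^e$ (and from $k^e$) to the next prime to tend to infinity. Size considerations alone no longer suffice — the Baker--Harman--Pintz bound only gives $\delta_e = O(k^{0.525\,e})$, so the characteristic roots of $(\delta_e)$ have modulus at most $k^{0.525}$, which may exceed $1$. The route I would take is to enlarge the supply of constraints: for each fixed $i \ge 0$ the sequence $(D_{\rm sq}(k^e + i))_{e\ge0}$ is linearly recurrent (same computation with $v = 1\,0^{e - |(i)_k|}\,(i)_k$), and all of these satisfy one common linear recurrence, of some fixed order; taking more of them than this order yields a nontrivial fixed $\mathbb{Q}$-linear relation $\sum_i \gamma_i\,\bigl(D_{\rm sq}(k^e+i) - 2k^e - 2i\bigr) = 0$, that is, a fixed linear relation among the distances from the nearby scales $2k^e + 2i$ to $P$, valid for all large $e$. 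One must then show such a relation is impossible given $\delta_e \to \infty$, exploiting the known irregularity of prime gaps. Making this last step rigorous is where the real work lies.
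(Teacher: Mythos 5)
Your argument is incomplete, and you say so yourself: the case $\delta_e \to \infty$ is left unresolved, and it is not a minor loose end --- it is precisely the hard part of the problem. What you have actually proved is this: \emph{if} the linearly recurrent sequence $\delta_e = D_{\rm sq}(k^e) - 2k^e$ takes some value infinitely often, then Skolem--Mahler--Lech plus your covering-congruence argument ($aK^t+b$ is a proper multiple of a fixed prime $q$ along an arithmetic progression of $t$) yields a contradiction. That part is correct and nicely executed. But nothing you have written rules out the scenario where $\delta_e \to \infty$ while remaining, say, $O(k^{0.525 e})$, i.e., where $(\delta_e)$ is a genuinely growing linear recurrence tracking the gap from $2k^e$ to the next prime or twice-prime. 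Your proposed remedy --- extracting a fixed $\mathbb{Q}$-linear relation among the sequences $D_{\rm sq}(k^e+i)$ for several $i$ and hoping to contradict it using ``the known irregularity of prime gaps'' --- is not an argument but a research program; there is no identified statement about primes that would finish it. So the proof as it stands has a genuine gap.

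It is worth knowing that the paper closes exactly this gap by citing it away. The paper's route is entirely different and much shorter: using the closure of $k$-regular sequences under addition, multiplication, and reduction mod $m$, it manufactures from $D_{\rm sq}$ a $k$-regular sequence $E$ with $E(n)=D_{\rm sq}(n)$ when $D_{\rm sq}(n)$ is odd (hence prime, for $n>4$) and $E(n)=2$ otherwise, so that $E$ takes only prime values for $n>4$; it is unbounded because Dirichlet gives infinitely many primes $p\equiv 1 \pmod 4$, and for $n=(p-1)/2$ one has $D_{\rm sq}(n)=p$ odd. The contradiction then comes from Bell's theorem that an unbounded $k$-regular sequence must take infinitely many composite values. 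Bell's theorem is precisely the black box that handles the ``growing'' case you got stuck on (its proof uses Skolem--Mahler--Lech-type machinery in roughly the spirit of your Case 1, but pushed through in general). If you want to salvage your approach, the cleanest fix is to stop trying to reprove that fact by hand along the ray $n=k^e$ and instead reduce to a sequence to which Bell's theorem applies.
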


\begin{proof}
We use the characterization of the discriminator sequence
$D_{\rm sq}(n)$ given above in Section~\ref{intro}.
Suppose $D_{\rm sq}(n)$ is $k$-regular.  Then
from Theorem~\ref{kreg} (c) we know that the sequence $A$ given by
$A(n) = D_{\rm sq}(n) \bmod 2$ is $k$-regular.    From
Theorem~\ref{kreg} (b)  we know that the sequence
$F(n) = A(n) D_{\rm sq}(n)$ is $k$-regular.  From Theorem~\ref{kreg} (a)
we know that the sequence $B(n) = 2 - 2A(n)$ is $k$-regular.
From Theorem~\ref{kreg} (a) we know that the sequence
$E(n) = F(n)+B(n)$ is $k$-regular.  It is now easy to see that
for $n > 4$ we have $E(n) = 2$ if $B(n)$ is even, while
$E(n) = D_{\rm sq}(n)$ if $D_{\rm sq}(n)$ is odd.    Thus $E(n)$ takes only
prime values for $n > 4$.

We now argue that $E(n)$ is unbounded.  To see this, it suffices
to show that there are infinitely many indices $n$
such that $D_{\rm sq}(n)$ is prime.   By Dirichlet's theorem on primes
in arithmetic progressions there are infinitely many primes $p$
for which $p \equiv 1$ (mod $4$).  For these primes consider
$n = (p-1)/2$.  Then $2n = p-1$ is divisible by $4$ and hence
not twice a prime, but $2n+1 = p$.  Hence
for these $n$ we have $D_{\rm sq}(n) = p = 2n+1$,
and hence $E(n) = D_{\rm sq}(n)$.
Thus $(E(n))$ is unbounded.

Finally, we apply a theorem of Bell \cite{Bell:2005} to the sequence $E$.  
Bell's theorem states that any unbounded $k$-regular sequence must take
infinitely many composite values.  However, the sequence $(E(n))$ is
unbounded and takes only prime values for $n > 4$.  This contradiction
shows that $D_{\rm sq}(n)$ cannot be $k$-regular.
\end{proof}

\section{Discriminator of the Cantor numbers}

Consider the Cantor numbers $(C(n))_{n \geq 0}$ 
$$ 0,2,6,8,18,20,24,26,54,56,60,62,72,74,78,80,162,164,168,170,180, \ldots $$
which are the numbers having only $0$'s and $2$'s in their base-$3$
expansion.   This is sequence \seqnum{A005823} in Sloane's
{\it On-Line Encyclopedia of Integer Sequences}.    It is
$2$-regular, as it satisfies the recurrence relations
\begin{align*}
C(2n) &= 3C(n) \\
C(2n+1) &= 3C(n) + 2.
\end{align*}

We have the following conjecture about the discriminator sequence
$D_C (n)$ of the Cantor numbers:

\begin{align*}
D_C(8n) &=  {{13} \over 3} D_C(4n) - 2 D_C(4n+1)  + {2 \over 3} D_C(4n+2) \\
D_C(8n+1) &= {3 \over 2} D_C(2n) + {7 \over 2} D_C(4n) -
	2 D_C(4n+1) + D_C(4n+2) \\
D_C(8n+2) &= {{10} \over 3} D_C(4n) - 2 D_C(4n+1)  + {5 \over 3} D_C(4n+2) \\
D_C(8n+3) &= {9 \over 2} D_C(2n) + {{11} \over 6} D_C(4n) -
        3 D_C(4n+1) + {8 \over 3} D_C(4n+2) \\
D_C(8n+4) &= 6 D_C(2n) -2 D_C(4n) + 2 D_C(4n+1) + D_C(4n+2) \\
D_C(8n+5) &= 6 D_C(2n) -2 D_C(4n) + D_C(4n+1) + 2 D_C(4n+2) \\
D_C(8n+6) &= {3 \over 2} D_C(2n) - {1 \over 2} D_C(4n) - D_C(4n+1) + 4 D_C(4n+2) \\
D_C(16n+7) &= -3 D_C(2n) + D_C(4n) + 7 D_C(4n+1) + 2D_C(4n+2)  \\
D_C(16n+15) &= -9 D_C(n) + {{27} \over 2} D_C(2n) - {{15} \over 2} D_C(4n)
	+9 D_C(4n+1) - 6 D_C(4n+2) + 10 D_C(4n+3). \\
\end{align*}
If true, this would mean that $D_C(n)$ is also $2$-regular.

\section{Self-discriminators}

In this section we change our indexing slightly.
Let ${\bf s} = (s_1, s_2, s_3, \ldots)$ be an increasing
sequence of positive integers and let
$D_{\bf s} = (d_1, d_2, d_3, \ldots )$ be its
associated discriminator sequence.
When does ${\bf s} = D_{\bf s}$?

\begin{theorem}
The sequence $\bf s$ is its own discriminator if and only if
either $s_i = i$ for all $i \geq 1$, or the following three
conditions hold:
\begin{itemize}
\item[(a)] There exists an integer $t \geq 1$ such that
$s$ begins $1,2,3, \ldots, t$ but not $1,2,3\ldots, t+1$; and
\item[(b)] $s_{t+1} \in \{ t+2, \ldots, 2t+1 \}$; and
\item[(c)] $s_{i+1} - s_i \in \{1,2, \ldots, t\}$ for $i > t$.
\end{itemize}
\end{theorem}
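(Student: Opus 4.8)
\emph{Proof proposal.}
The plan is to prove both directions, handling the identity sequence $s_i=i$ as a trivial case. Three elementary facts will be used throughout: $D_{\bf s}$ is nondecreasing; $D_{\bf s}(n)\ge n$, since $n$ pairwise incongruent residues modulo $m$ force $m\ge n$; and the discriminator of the initial segment $1,2,\dots,n$ of the positive integers equals $n$, since these are distinct modulo $n$ while a pigeonhole argument excludes every smaller modulus. In particular the identity sequence really is a self-discriminator, and any self-discriminator has $s_1=D_{\bf s}(1)=1$. Assume henceforth ${\bf s}\ne(1,2,3,\dots)$; then there is a largest $t\ge 1$ with $s_i=i$ for $1\le i\le t$, and since ${\bf s}$ is increasing with $s_{t+1}\ne t+1$ we get $s_{t+1}\ge t+2$, which is condition~(a). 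It remains to prove that, given~(a), the equality ${\bf s}=D_{\bf s}$ is equivalent to the conjunction of~(b) and~(c).

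For the ``if'' direction, assume (a)--(c) and prove $D_{\bf s}(n)=s_n$ by strong induction on $n$. The cases $n\le t$ are the computation just mentioned. For $n\ge t$ the inductive step is uniform: assuming $D_{\bf s}(j)=s_j$ for all $j\le n$, I show $D_{\bf s}(n+1)=s_{n+1}$. The modulus $m=s_{n+1}$ discriminates $s_1,\dots,s_{n+1}$, since these increase and reduce modulo $s_{n+1}$ to the distinct residues $s_1,\dots,s_n,0$. Every $m<s_n=D_{\bf s}(n)$ fails already on $s_1,\dots,s_n$, and (needed only when $n=t$) so does $m=t$, since $1,\dots,t$ fill all residue classes modulo $t$. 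For the remaining moduli $\max(t+1,s_n)\le m<s_{n+1}$, put $k=s_{n+1}-m$, so that $1\le k\le t$: this bound uses~(c) when $n>t$, and when $n=t$ it uses $s_{t+1}\le 2t+1$ from~(b) together with $m\ge t+1$. Since $m\ge t+1>t\ge k$ we have $k<m$, so $s_k=k\equiv s_{n+1}\pmod m$ with $k\ne n+1$ --- a collision. Thus no modulus below $s_{n+1}$ works, and $D_{\bf s}(n+1)=s_{n+1}$.

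For the ``only if'' direction, assume ${\bf s}=D_{\bf s}$, with $t$ as above. To obtain~(b), suppose $s_{t+1}\ge 2t+2$; I exhibit a modulus $m<s_{t+1}$ discriminating $1,\dots,t,s_{t+1}$, contradicting $D_{\bf s}(t+1)=s_{t+1}$. If $(t+1)\mid s_{t+1}$, take $m=t+1<s_{t+1}$, with residues $1,\dots,t,0$; otherwise $s_{t+1}\ge 2t+3$, so $m=s_{t+1}-(t+1)\ge t+2$, and since $s_{t+1}=m+(t+1)$ with $t+1<m$ the residues are $1,\dots,t,t+1$, all distinct, with $m<s_{t+1}$. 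Either way a contradiction, so $s_{t+1}\le 2t+1$, giving~(b). To obtain~(c), suppose $s_{i+1}-s_i\ge t+1$ for some $i>t$; I show $m=s_{i+1}-(t+1)$ discriminates $s_1,\dots,s_{i+1}$, contradicting $D_{\bf s}(i+1)=s_{i+1}$ since $m<s_{i+1}$. Here $m\ge s_i\ge s_{t+1}\ge t+2$, so each $s_j$ with $j\le t$ reduces to $j$; each $s_j$ with $t<j\le i$ has $t+2\le s_j\le s_i\le m$ and reduces to $s_j$, except that if $s_j=m$ --- which forces $j=i$ and $s_{i+1}-s_i=t+1$ --- it reduces to $0$; and $s_{i+1}=m+(t+1)$ reduces to $t+1$. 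These $i+1$ residues lie in the pairwise disjoint sets $\{1,\dots,t\}$, $\{t+1\}$, possibly $\{0\}$, and $\{s_{t+1},\dots,s_i\}\cap[t+2,m-1]$, hence are pairwise distinct, so $m$ discriminates. This yields the contradiction and establishes~(c).

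The step I expect to be the main obstacle is the verification of~(c) in the ``only if'' direction: one must hit upon the right modulus $m=s_{i+1}-(t+1)$ and then carefully track the residues of three interacting blocks --- the initial run $1,\dots,t$, the already-placed terms $s_{t+1},\dots,s_i$, and the new term $s_{i+1}$ --- confirming that they remain pairwise distinct, with particular attention to the boundary subcase where $s_i$ itself becomes $\equiv 0\pmod m$. The remainder is routine bookkeeping resting on the two elementary properties of $D_{\bf s}$ recorded at the outset.
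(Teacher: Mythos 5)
Your proof is correct and follows essentially the same approach as the paper's: ruling out moduli below $s_{n+1}$ via collisions with the initial segment $1,\dots,t$, and in the converse direction exhibiting the discriminating modulus $s-(t+1)$ to contradict minimality. Your write-up is somewhat more careful than the paper's (e.g., the explicit residue bookkeeping for the modulus $s_{i+1}-(t+1)$ and the boundary subcase $s_i\equiv 0$), but the underlying argument is the same.
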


\begin{proof}
The case where $s_i = i$ for all $i$ is easy and is left to the reader.
Similarly, if $s_1 > 1$ then $\bf s$ cannot be its own discriminator.
Otherwise, assume condition (a) holds.

Let $(d_i)_{i \geq 1}$ be the discriminator of ${\bf s} = (s_i)_{i \geq 1}$.
We show that conditions (b) and (c) hold iff
$d_i = s_i$ for all $i$.  If $1 \leq i \leq t$, this is clear.
There are two cases to consider.

\medskip

$i = t+1$:  assume $d_{t+1} = s_{t+1}$.  
Since $(s_i)_{i \geq 1}$ is increasing, from (a) we have
$s_{t+1} \geq t+2$ and so $d_{t+1} \geq t+2$.  On the other hand, if 
$s_{t+1} \geq 2t+2$, then the sequence $(1,2,\ldots, t, s_{t+1})$ is
discriminated by $s_{t+1} - (t + 1)$, a contradiction.  So
$s_{t+1} \leq 2t+1$.

For the other direction,
suppose $s_{t+1} \in \{ t+2, \ldots, 2t+1 \}$.  Then
contrary to what we want to prove, if
$t \leq d_{t+1} < s_{t+1}$, then $s_{t+1} \equiv
\modd{s_{t+1} - d_{t+1}} {d_{t+1}}$.  
Unless $s_{t+1} = 2t+1$, $d_{t+1} = t$, we have
$1 \leq s_{t+1} - d_{t+1} \leq t$, a contradiction,
since then $s_{t+1} \bmod d_{t+1}$ already occurred
in $s_1 \bmod d_{t+1}, s_2 \bmod d_{t+1}, \ldots, s_t \bmod d_{t+1}$.
In the exceptional case $s_{t+1} = 2t+1$, $d_{t+1} = t$, which gives
$s_{t+1} \bmod d_{t+1} = 1 = s_1$, a contradiction.
So $d_{t+1} \geq s_{t+1}$.
On the other hand,
it is easy to see that $s_{t+1}$ discriminates $s_1, s_2, \ldots, s_{t+1}$.  

\medskip

$i > t+1$:
Suppose $d_i = s_i$ for all $i > t+1$, and, to get a contradiction,
let $i >  t+1$ be the smallest index for which 
$s_i - s_{i-1} \not\in \{1,2, \ldots, t\}$.  
We cannot have $s_i = s_{i-1}$ because the sequence $(s_i)_{i \geq1}$
is strictly increasing.  So $d_i = s_i \geq s_{i-1} + t + 1$.  
But then the sequence $(s_1, s_2, \ldots, s_i)$
is also discriminated by $s_i - (t+1)$, a contradiction.
So $s_i - s_{i-1} \in \{1,2, \ldots, t\}$, as claimed.

For the other direction, assume
$s_{i} - s_{i-1} \in \{1,2, \ldots, t\}$ and $d_i < s_i$.  Then 
${s_i \bmod d_i  } = s_i - d_i < t$, a contradiction.
On the other hand, $s_i \bmod d_i = 0$, which is not an element of
$\bf s$, so $d_i$ discriminates $s_1, \ldots, s_i$, as desired.
\end{proof}

\begin{corollary}
There are uncountably many increasing sequences of positive integers
that are their own discriminators.
\end{corollary}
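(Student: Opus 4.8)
The plan is to deduce this immediately from the preceding characterization theorem by exhibiting an explicit injection from an uncountable set — the set of all functions $f \colon \{3, 4, 5, \ldots\} \to \{1, 2\}$ — into the collection of increasing integer sequences that are their own discriminators. The point is that clause (c) of the theorem leaves infinitely many independent binary choices of successive gaps, so once we fix a small legal prefix we obtain a Cantor-set's worth of valid sequences.

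Concretely, I would fix the parameter $t = 2$ together with the prefix $s_1 = 1$, $s_2 = 2$, $s_3 = 4$. For each function $f \colon \{3, 4, 5, \ldots\} \to \{1, 2\}$, define a sequence ${\bf s}^{(f)} = (s^{(f)}_i)_{i \geq 1}$ by $s^{(f)}_1 = 1$, $s^{(f)}_2 = 2$, $s^{(f)}_3 = 4$, and $s^{(f)}_{i+1} = s^{(f)}_i + f(i)$ for all $i \geq 3$. Since every step adds $1$ or $2$, each ${\bf s}^{(f)}$ is a strictly increasing sequence of positive integers.

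Next I would check that ${\bf s}^{(f)}$ satisfies conditions (a), (b), (c) of the theorem with this value of $t$. Condition (a) holds because ${\bf s}^{(f)}$ begins $1, 2$ but has $s^{(f)}_3 = 4 \neq 3$, so it does not begin $1,2,3$. Condition (b) holds because $s^{(f)}_{t+1} = s^{(f)}_3 = 4 \in \{4, 5\} = \{t+2, \ldots, 2t+1\}$. Condition (c) holds by construction, since $s^{(f)}_{i+1} - s^{(f)}_i = f(i) \in \{1, 2\} = \{1, \ldots, t\}$ for every $i > t = 2$. Hence, by the theorem, each ${\bf s}^{(f)}$ is its own discriminator.

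Finally, I would observe that $f \mapsto {\bf s}^{(f)}$ is injective: if $f$ and $g$ first disagree at some index $j \geq 3$, then ${\bf s}^{(f)}$ and ${\bf s}^{(g)}$ agree at indices $1, \ldots, j$ but differ at index $j+1$ (one adds $1$, the other adds $2$), so ${\bf s}^{(f)} \neq {\bf s}^{(g)}$. As there are uncountably many functions $f \colon \{3, 4, 5, \ldots\} \to \{1, 2\}$, there are uncountably many increasing sequences of positive integers that are their own discriminators. I do not expect any genuine obstacle; the only step deserving a moment's attention is verifying that the fixed prefix and the gap constraint land in exactly the sets prescribed by clauses (a)--(c), which is the short computation carried out above.
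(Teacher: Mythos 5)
Your proof is correct and is exactly the argument the paper intends: the corollary is stated without proof as an immediate consequence of the characterization theorem, and your explicit construction (fixing $t=2$ and encoding an arbitrary function $f\colon\{3,4,\ldots\}\to\{1,2\}$ in the successive gaps) just fills in the routine details. The verification of conditions (a)--(c) and of injectivity is accurate, so nothing is missing.
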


\begin{corollary}
For $1 \leq t \leq n$,
the number of length-$n$ finite sequences,
beginning with $1, 2, \ldots, t$ but not
$1, 2, \ldots, t, t+1$, that are self-discriminators
is $t^{n-t}$.  Hence the total number of finite sequences of length
$n$ that are self-discriminators is
$\sum_{1 \leq t \leq n} t^{n-t}$.  
\end{corollary}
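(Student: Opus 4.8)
The plan is to first establish a finite analogue of the preceding characterization theorem, and then carry out a straightforward count. The key observation is that the discriminator value $d_i = D_{\bf s}(i)$ depends only on the first $i$ terms $s_1, \ldots, s_i$. Consequently every argument in the proof of the preceding theorem is local --- when analyzing $d_i$ it refers only to indices at most $i$ --- and so it transfers verbatim to finite sequences. This yields: a length-$n$ increasing sequence of positive integers $s_1, \ldots, s_n$ is a self-discriminator if and only if either $s_i = i$ for $1 \le i \le n$, or there is an integer $t$ with $1 \le t < n$ such that $\bf s$ begins $1, 2, \ldots, t$ but not $1, 2, \ldots, t+1$, and moreover $s_{t+1} \in \{t+2, \ldots, 2t+1\}$ and $s_{i+1} - s_i \in \{1, 2, \ldots, t\}$ for all $i$ with $t < i \le n-1$. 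Note that the identity case is exactly the case $t = n$, so for every self-discriminator of length $n$ there is a unique $t \in \{1, \ldots, n\}$ for which $\bf s$ begins with $1, 2, \ldots, t$ but not $1, 2, \ldots, t+1$ (reading the latter condition as vacuously true when $t = n$).

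Next I would fix such a $t$ and count. The first $t$ terms are forced to be $1, 2, \ldots, t$. If $t = n$ there is nothing left to choose and the count is $1 = n^{0} = t^{\,n-t}$. If $t < n$, then $s_{t+1}$ may be any of the $t$ values in $\{t+2, \ldots, 2t+1\}$ (this set has exactly $t$ elements, and it automatically excludes $t+1$, so condition (a) is respected); and for each subsequent index, $s_{i+1}$ is obtained from $s_i$ by adding an increment chosen freely from $\{1, 2, \ldots, t\}$, giving $t$ choices for each of the $n - t - 1$ terms $s_{t+2}, \ldots, s_n$. Since the increments are positive and $s_{t+1} > s_t$, every such choice produces a genuinely increasing sequence; distinct choice vectors produce distinct sequences; and every sequence meeting the characterization arises this way. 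Hence the count for a fixed $t$ is $t \cdot t^{\,n-t-1} = t^{\,n-t}$.

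Finally, summing over the disjoint cases $t = 1, 2, \ldots, n$ gives the total $\sum_{1 \le t \le n} t^{\,n-t}$, as claimed.

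The main obstacle is the first step: one must check carefully that the characterization theorem does adapt to the finite setting --- that truncating a sequence to length $n$ changes neither which of the conditions (a)--(c) are needed nor their form, and in particular that the ``$s_i = i$ for all $i$'' alternative becomes exactly the boundary case $t = n$. Once this is in hand, the remainder is a routine product-rule computation, and the only thing to watch is the off-by-one in counting the free increments (there are $n - t$ terms after position $t$: the first ranges over an explicit $t$-element set, and each of the remaining $n - t - 1$ contributes an independent factor of $t$).
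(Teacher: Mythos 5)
Your proposal is correct and follows essentially the same route as the paper: invoke the characterization theorem (with the $t=n$ case absorbing the identity sequence) and count $t$ choices for each of the $n-t$ positions after the initial segment $1,\ldots,t$. The extra care you take in justifying that the characterization transfers to finite sequences is a point the paper passes over silently, but it does not change the argument.
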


\begin{proof}
Suppose $s_1, s_2, \ldots, s_t$ are fixed.  If $t = n$, there is
exactly one such sequence.   Otherwise $s_{t+1}$ is
constrained to lie in $\{ t+2, \ldots, 2t+1 \}$, which is of
cardinality $t$, and subsequent terms $s_i$ (if there are any)
are constrained to lie
in $\{ s_{i-1} + 1, \ldots, s_{i-1} + t \}$, which is also of
cardinality $t$.  There are $n-t$ remaining
terms, which gives $t^{n-t}$ possible extensions of length $n$.
\end{proof} 
\begin{remark}
The number of finite sequences of length $n$ that are self-discriminators
is given by sequence \seqnum{A026898} in Sloane's 
{\it On-Line Encyclopedia of Integer Sequences} \cite{Sloane:2016}.
\end{remark}

\section{Acknowledgment}

We are grateful to Pieter Moree for having introduced us to this 
interesting topic.  The idea of considering the discriminator for $k$-regular related sequences is due to Maike Neuwohner.
She did this during an internship with Pieter Moree, who challenged her
to find sequences having an a discriminator 
displaying a well-describable behaviour. In particular she determined
the discriminator for a class of sequences related to 
Szekeres' sequence \seqnum{A003278}; see \cite{Neuwohner:2015}.


\begin{thebibliography}{99}

\bibitem{Allouche&Shallit:1992}
J.-P. Allouche and J. Shallit.
\newblock The ring of $k$-regular sequences.
\newblock {\it Theoret. Comput. Sci.} {\bf 98} (1992), 163--197.

\bibitem{Allouche&Shallit:2003}
J.-P. Allouche and J. Shallit.
\newblock The ring of $k$-regular sequences, II.
\newblock {\it Theoret. Comput. Sci.} {\bf 307} (2003), 3--29.

\bibitem{ABM}
L. K. Arnold, S. J. Benkoski, and B. J. McCabe.
\newblock The discriminator (a simple application of Bertrand's postulate).
\newblock {\it Amer. Math. Monthly} {\bf 92} (1985) 275--277.

\bibitem{Barcau:1988}
M. Barcau.
\newblock A sharp estimate of the discriminator.
\newblock {\it Nieuw. Arch. Wisk.} {\bf 6} (1988), 247--250.

\bibitem{Bell:2005}
J. P. Bell.
\newblock On the values attained by a $k$-regular sequence.
\newblock {\it Adv. Appl. Math.} {\bf 34} (2005), 634--643.

\bibitem{Bremser&Schumer&Washington:1990}
P. S. Bremser, P. D. Schumer, and L. C. Washington.
\newblock A note on the incongruence of consecutive integers to a fixed power.
\newblock {\it J. Number Theory} {\bf 35} (1990) 105--108.

\bibitem{Moree:1996}
P. Moree.
\newblock The incongruence of consecutive values of polynomials.
\newblock {\it Finite Fields Appl.} {\bf 2} (1996) 321--335.

\bibitem{Moree&Mullen:1996}
P. Moree and G. L. Mullen.
\newblock Dickson polynomial discriminators.
\newblock {\it J. Number Theory} {\bf 59} (1996) 88--105.

\bibitem{Moree&Roskam:1995}
P. Moree and H. Roskam.
\newblock On an arithmetical function related to Euler's 
totient and the discriminator.
\newblock {\it Fibonacci Quart.} {\bf 33} (1995), 332--340.

\bibitem{Moree&Zumalacarregui:2016}
P. Moree and A. Zumalac\'arregui.
\newblock Salajan's conjecture on discriminating terms
in an exponential sequence.
\newblock {\it J. Number Theory} {\bf 160} (2016) 646--665.

\bibitem{Neuwohner:2015}
M. Neuwohner.
\newblock Discriminating consecutive values of generalized Szekeres sequences.
\newblock MPIM internship report, Max-Planck-Institut f\"ur Mathematik,
September 2015. 

\bibitem{Schumer:1990}
P. Schumer.
\newblock On the incongruence of consecutive cubes.
\newblock {\it Math. Student} {\bf 58} (1990), 42--48.

\bibitem{Schumer&Steinig:1988}
P. Schumer and J. Steinig.
\newblock On the incongruence of consecutive fourth powers.
\newblock {\it Elem. Math.} {\bf 43} (1988), 145--149.

\bibitem{Sloane:2016}
N. J. A. Sloane et al.
\newblock The On-Line Encyclopedia of Integer Sequences.
\newblock Available at \url{http://oeis.org}.

\bibitem{Sun:2013}
Zhi-Wei Sun.
\newblock On functions taking only prime values.
\newblock {\it J. Number Theory} {\bf 133} (2013) 2794--2812.

\bibitem{Zieve:1998}
M. Zieve.
\newblock A note on the discriminator.
\newblock {\it J. Number Theory}  {\bf 73} (1998) 122--138.

\end{thebibliography}
\end{document}